\DeclareMathOperator{\ord}{Ord}
\DeclareMathOperator{\sord}{Sord}
\newcommand{\Fq}{\mathbb{F}_q}
\newtheorem{theorem}{Theorem}[section]
\newtheorem{lemma}[theorem]{Lemma}
\newtheorem{conjecture}[theorem]{Conjecture}
\theoremstyle{definition}
\newtheorem{definition}{Definition}[section]
\title{A Database of Quantum Codes}
\author{Nuh Aydin\footnote{Department of Mathematics, 
Kenyon College, aydinn@kenyon.edu}, Peihan Liu\footnote{Department of Mathematics, 
University of Michigan, paulliu@umich.edu}, Bryan Yoshino\footnote{Department of Mathematics, 
Kenyon College, yoshino1@kenyon.edu}}
\date{August, 2021}
\begin{document}

\maketitle

\begin{abstract}
Quantum error correcting codes (QECC) is becoming an increasingly important  branch of coding theory. For classical block codes, a comprehensive \href{codetables.de} 
{database} of best known codes exists which is available online at \cite{codetables}. The same database contains data on best known quantum codes as well, but only for  the binary field. There has been an increased interest in quantum codes over larger fields with many papers reporting such codes in the literature. However, to the best of our knowledge, there is no database of best known quantum codes for most fields.  We established a new database of QECC that includes codes over $\mathbb{F}_{q^2}$  for  $q\leq 29$. We also present several methods of constructing quantum codes from classical codes based on the CSS construction. We have found dozens of new quantum codes that improve the previously known parameters and also hundreds new quantum codes that did not exist in the literature.\\

\noindent\textbf{Keywords.} Quantum error-correcting codes, CSS construction, constacyclic codes, polycyclic codes. 
\end{abstract}

\section{Introduction and Motivation}


Compared to classical information theory, the field of quantum information theory is relatively young. The idea of quantum error correcting codes was first introduced in \cite{Quantumoriginal1} and \cite{Quantumoriginal2}.  A method of constructing quantum error correcting codes (QECC) was given in \cite{Quantumoriginal3}. Since then researchers have investigated various methods of using classical error correcting codes to construct new QECCs. There are many publications in the literature with new quantum codes. However, compared to the classical block codes, the database of quantum codes is very limited. While \href{http://codetables.de/}{the database} (\cite{codetables}) covers the finite fields of orders $2,3,4,5,7,8$ and $9$, it only presents a database of QECC for $q=2$. There are \href{https://www.mathi.uni-heidelberg.de/~yves/Matritzen/QTBCH/QTBCHIndex.html} {static tables of some quantum codes} given at  \cite{quantumtwistedtables}, based on the work \cite{quantumtwisted}.  In recent years, researchers have conducted searches for good quantum codes, and some results are presented in the papers listed in the bibliography. We notice that the online tables \cite{quantumtwistedtables} may have been overlooked by some researchers. In many cases QECCs  over finite fields of characteristic greater than 2 are presented in the literature. Hence there is a need to have an interactive database of QECCs that contains best known quantum codes over larger alphabets which is continually updated. 
We now have compiled such a database. It is available at \url{http://quantumcodes.info}, and it is being updated continually. In addition to the creation of the database, we have also devised and implemented some
search algorithms to find new quantum codes from classical codes using the CSS construction. The paper also describes and reports the
results of these searches. One of the search methods we consider in this paper is about deriving quantum codes from polycyclic codes associated with trinomials, which is based on the CSS construction. This methods has been used in a recent work \cite{aydin2021polycyclic}. Here, we expand this method to construct more quantum codes.

This paper is organized as follows.  In section 2, we recall some basic concepts about quantum codes and  polycyclic codes. In section 3, we present some search algorithms along with the new quantum codes that we have found. A new code means it either has a better minimum distance than existing best known quantum code with the same length and dimension over the same alphabet  or a code with these parameters does not exist in literature. All computations are performed using the \href{http://magma.maths.usyd.edu.au/magma/} {MAGMA software}.

\section{Preliminaries}
Let $\mathbb{C}$ be the field of complex numbers and consider the Hilbert space $(\mathbb{C}^q)^{\otimes n}=\underbrace{\mathbb{C}^q\otimes \mathbb{C}^q\otimes\cdots \otimes\mathbb{C}^q}_{n \text{ times }}$. A $q$-ary quantum code of length and dimension $k$ is a $q^k$-dimensional subspace of $(\mathbb{C}^q)^{\otimes n}$. If the minimum distance of a quantum code is $d$ then we denote it by $[[n,k,d]]_q$. A connection between classical block codes and quantum codes was given in \cite{Quantumoriginal3} which has been used extensively in the literature. We include the relevant results here.

\begin{theorem}\cite{Quantumoriginal3}
Suppose $\bar{S}$ is an ($n - k$)-dimensional linear subspace of $\bar{E}$ (a $2n$-dimensional binary vector space) which is contained in its dual $\bar{S}^\perp$ (with respect to the inner product, and is such that there are no vectors of weight $\leq d - 1$ in $\bar{S}^\perp \setminus \bar{S}$. Then there is a quantum-error-correcting code mapping $k$ qubits to $n$ qubits which can correct $[(d - 1)/2]$ errors.
\end{theorem}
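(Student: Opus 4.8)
The plan is to prove this stabilizer-formalism result by translating the classical coding-theoretic hypotheses into statements about quantum error operators acting on the Hilbert space. I first need to make the setup precise: the $2n$-dimensional binary space $\bar E$ parametrizes the error group generated by bit-flip operators $X$ and phase-flip operators $Z$, where a vector $(a \mid b) \in \mathbb{F}_2^n \times \mathbb{F}_2^n$ corresponds (up to phase) to the tensor-product operator $X^{a_1}Z^{b_1} \otimes \cdots \otimes X^{a_n}Z^{b_n}$. The inner product referenced in the hypothesis is the symplectic form $\langle (a\mid b),(a'\mid b')\rangle = a\cdot b' + b\cdot a'$, and the key algebraic fact is that two error operators commute precisely when their parameter vectors are orthogonal under this symplectic form.

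The forward direction of the argument would proceed as follows. First I would let $S$ denote the abelian subgroup of the error group corresponding to the isotropic subspace $\bar S$; isotropy ($\bar S \subseteq \bar S^{\perp}$) guarantees that all the operators in $S$ pairwise commute, so they can be simultaneously diagonalized. I would then define the quantum code $Q$ to be a joint eigenspace of $S$ (conventionally the $+1$ eigenspace of every stabilizer generator). A dimension count shows that since $\bar S$ has dimension $n-k$, imposing $n-k$ independent commuting constraints cuts the $2^n$-dimensional space down to a subspace of dimension $2^k$, giving the desired $[[n,k,d]]$ parameters and realizing the map from $k$ logical qubits into $n$ physical qubits.

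The heart of the proof, and the step I expect to be the main obstacle, is establishing the distance claim: that the code corrects $\lfloor (d-1)/2\rfloor$ errors, using the hypothesis that $\bar S^{\perp}\setminus \bar S$ contains no vector of weight less than $d$. The standard quantum error-correction (Knill--Laflamme) condition requires that for any two correctable errors $E_1, E_2$, the product $E_1^{\dagger}E_2$ either acts as a scalar on the code or maps the code off itself in a detectable way. I would translate this into the error group: an error operator $E$ corresponding to a vector $e \in \bar E$ is undetectable exactly when $E$ commutes with all of $S$ (so $e \in \bar S^{\perp}$) yet acts nontrivially on $Q$ (so $e \notin \bar S$), i.e. when $e \in \bar S^{\perp}\setminus\bar S$. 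The weight hypothesis then says every such undetectable error has weight at least $d$, so all errors of weight at most $d-1$ are detectable; a routine argument converting detection of weight-$(d-1)$ errors into correction of weight-$\lfloor(d-1)/2\rfloor$ errors completes the proof. The delicate points are handling the phase factors in the error group correctly and verifying that the symplectic orthogonality condition exactly captures operator commutation, since these are where the correspondence between the linear-algebraic hypothesis and the operator-theoretic conclusion can break down.
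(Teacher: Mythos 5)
This theorem is stated in the paper without proof; it is quoted verbatim from the cited reference \cite{Quantumoriginal3}, so there is no in-paper argument to compare against. Your sketch correctly reproduces the standard stabilizer-formalism proof from that source: the symplectic correspondence between $\bar{E}$ and the Pauli error group, simultaneous diagonalization of the abelian group lifted from the isotropic subspace $\bar{S}$, the $2^k$-dimensional eigenspace count, and the identification of undetectable errors with $\bar{S}^\perp \setminus \bar{S}$ feeding into the Knill--Laflamme conditions via the observation that a product of two weight-$\lfloor(d-1)/2\rfloor$ errors has weight at most $d-1$. The two points you flag as delicate are indeed the only real loose ends in the sketch --- in particular, choosing phases so that the lifted group is genuinely abelian and does not contain $-I$ --- but both are handled by standard arguments in \cite{Quantumoriginal3}, so your approach is essentially the same as the cited one.
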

\begin{theorem}\cite{Quantumoriginal3}
Given two codes $[[n_1, k_1, d_1]]$ and $[[n_2, k_2, d_2]]$ with $k_2\leq n_1$ we can construct an $[[n_1 + n_2 - k_2, k_1, d]]$ code, where $d \geq \min\{d_1,d_1+d_2-k_2\}$.
\end{theorem}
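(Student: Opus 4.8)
The plan is to obtain the combined code by \emph{concatenation}, using $[[n_1,k_1,d_1]]$ as the outer code and $[[n_2,k_2,d_2]]$ as an inner code applied to $k_2$ of the outer code's $n_1$ coordinates (this is exactly where the hypothesis $k_2\le n_1$ is used), leaving the remaining $n_1-k_2$ coordinates untouched. Because the inner code turns $k_2$ qubits into $n_2$ qubits, the length becomes $(n_1-k_2)+n_2=n_1+n_2-k_2$ while the number of logical qubits stays $k_1$, which already matches the asserted parameters. First I would record, via Theorem~2.1, the self-orthogonal subspaces $\bar S_1\subseteq\bar S_1^\perp$ and $\bar S_2\subseteq\bar S_2^\perp$ of dimensions $n_1-k_1$ and $n_2-k_2$, and observe that the target code must correspond to a self-orthogonal $\bar S$ of dimension $(n_1+n_2-k_2)-k_1=(n_1-k_1)+(n_2-k_2)$, the sum of the two given dimensions.

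Next I would construct $\bar S$ explicitly. Fix logical representatives $\bar X_1,\bar Z_1,\dots,\bar X_{k_2},\bar Z_{k_2}\in\bar S_2^\perp$ for the inner code obeying the same inner-product relations as the coordinate operators they will replace. Let $\bar S$ be generated by $\bar S_2$ on the inner block together with the generators of $\bar S_1$, each single-coordinate operator on a selected coordinate rewritten through the corresponding inner representative; this produces $(n_2-k_2)+(n_1-k_1)$ generators, the correct dimension. I would then check $\bar S\subseteq\bar S^\perp$: the substituted operators reproduce the relations of the coordinates they replace and are orthogonal to all of $\bar S_2$, so the orthogonality relations within $\bar S_1$ and within $\bar S_2$ are inherited. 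One also verifies that the natural projection onto the outer code's space induces an isomorphism $\bar S^\perp/\bar S\cong\bar S_1^\perp/\bar S_1$, confirming the dimension $k_1$ and showing that every vector of $\bar S^\perp\setminus\bar S$ projects to a nontrivial $\bar L_1\in\bar S_1^\perp\setminus\bar S_1$. By Theorem~2.1 it then remains only to bound the minimum weight on $\bar S^\perp\setminus\bar S$.

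The distance estimate is the core of the argument. Given $v\in\bar S^\perp\setminus\bar S$, let $\bar L_1$ be its nontrivial outer projection, of weight at least $d_1$; write $w_b$ for the weight of $\bar L_1$ on the $n_1-k_2$ untouched coordinates and $m\le k_2$ for the number of selected coordinates it touches, so that $w_b+m\ge d_1$. If $m=0$, then $v$ is supported on the untouched coordinates and has weight $w_b\ge d_1$. If $m>0$, then on the $n_2$-coordinate inner block $v$ realizes a nontrivial logical operator of the inner code, contributing weight at least $d_2$ there; since this block is disjoint from the untouched coordinates, the weight of $v$ is at least $w_b+d_2\ge(d_1-k_2)+d_2=d_1+d_2-k_2$. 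Taking the minimum over the two cases yields $d\ge\min\{d_1,d_1+d_2-k_2\}$, as claimed.

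The step I expect to be the main obstacle is the weight accounting in the case $m>0$, because the distance is a minimum over the coset $v+\bar S$: one must ensure that reducing weight by adding elements of $\bar S$ — especially the inner stabilizer $\bar S_2$ — cannot trivialize the inner block while leaving the outer projection nontrivial, so that the bound of $d_2$ on a support disjoint from the untouched coordinates genuinely survives minimization. The inequality $m\le k_2$ is immediate, but confirming that the untouched-coordinate weight and the inner-block weight add over disjoint supports, rather than interfering through the choice of inner representatives, is what demands the most care.
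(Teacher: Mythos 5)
The paper does not prove this statement at all; it is quoted verbatim from \cite{Quantumoriginal3}, so the only comparison available is with the standard argument in that cited source, and your concatenation construction (encoding $k_2$ of the first code's $n_1$ qubits with the second code) is exactly that argument. Your proof is correct: the substituted generators give a self-orthogonal $\bar S$ of dimension $(n_1-k_1)+(n_2-k_2)$, every $v\in\bar S^\perp\setminus\bar S$ induces a nontrivial class in $\bar S_1^\perp/\bar S_1$, and the two-case weight count gives $d\geq\min\{d_1,\,d_1+d_2-k_2\}$. Two small points: in the case $m=0$ the vector $v$ need not be supported only on the untouched coordinates (its inner part can be a nonzero element of $\bar S_2$), but its weight is still at least its untouched weight, which is at least $d_1$, so the bound survives; and your closing worry about minimization over the coset $v+\bar S$ is moot, because the distance is the minimum weight over all elements of $\bar S^\perp\setminus\bar S$ and your case analysis bounds each element individually---an element whose inner block is trivialized by adding inner stabilizers simply falls into the $m=0$ case and is then bounded by $d_1$, which is already one of the two terms in the minimum.
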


The method of constructing quantum codes this way is known as CSS construction. There is a large number of papers in the literature that present new quantum codes obtained by the CSS construction. Recently, we have used polycyclic codes for the same purpose \cite{aydin2021polycyclic}.  

\begin{definition}
  A linear code $C$ is said to be polycyclic with respect to $v = (v_0, v_1, ... , v_{n-1}) \in \Fq^n$ if for any codeword $(c_0, c_1, ..., c_{n-1}) \in C$, its right polycyclic shift, $(0,c_0,c_1,\dots, c_{n-2})+c_{n-1}(v_0,v_1,\dots,v_{n-1})$ is also a codeword. Similarly, $C$ is  \textit{left} polycyclic with respect to $v = (v_0, v_1, ... , v_{n-1}) \in \Fq^n$ if for any codeword  $(c_0, c_1, ..., c_{n-1}) \in C$, its left polycyclic shift $(c_1,c_2,\dots, c_{n-1},0)+c_0(v_0,v_1,\dots,v_{n-1})$ is also a codeword.  If $C$ is both \textit{left} and \textit{right} polycyclic, then it is \textit{bi-polycyclic}.
\end{definition}
In this work, we only work with right polycyclic codes, which we simply refer to as polycyclic codes. Under the usual identification of vectors with polynomials, each polycyclic code $C$ of length $n$ is associated with a vector $v$ of length $n$ (or a polynomial $v(x)$ of degree less than $n$). We call $v$ ($v(x)$) an associate vector (polynomial) of $C$, and we say that $C$ is a polycyclic code associated with $x^n-v(x)$. Moreover, polycyclic codes of length $n$  associated with $f(x)=x^n-v(x)$ are ideals of the factor ring $\Fq[x]/\langle f(x) \rangle$. Note that an associate polynomial of a polycyclic code may not be unique.

Polycyclic codes are a generalization of cyclic codes and its several important generalizations. The following are some of the most important special cases of polycyclic codes:
\begin{itemize}
\item A right polycyclic code  with respect to $v=(1,0,0,...,0)$ is a cyclic code.\\
A left polycyclic code  with respect to $v=(0,0,0...,1)$ is a cyclic code.
\item A right polycyclic code with respect to $v=(-1,0,0,...,0)$ is a  negacyclic code. A left polycyclic code with respect to  $v=(0,0,0...,-1)$ is a negacyclic code.
\item A right polycyclic code with respect to $v=(a,0,0,...,0)$ is a constacyclic code. A left polycyclic code with respect to  $v=(0,0,0,...,a^{-1})$ is a constacyclic code.
\end{itemize}

\section{Quantum CSS Codes from Polycyclic Codes}

 Many of the works in the literature with new QECCs use  the CSS construction method given in \cite{Quantumoriginal3} or something related  to it. In this method, self-dual, self-orthogonal and dual-containing linear codes are used to construct quantum codes. The CSS construction requires two linear codes $C_1$ and $C_2$ such that $C_2^{\perp}\subseteq C_1$. Hence, if $C_1$ is a self-dual code, then we can construct a CSS quantum code using $C_1$ alone since $C_1^{\perp}\subseteq C_1$. If  $C_1$ is self-orthogonal, then we can construct a CSS quantum code with $C_1^{\perp}$ and $C_1$ since $C_1 \subseteq C_1^{\perp}$. Similarly in the case when $C_1$ is a dual-containing code.

\subsection{CSS Codes by Two Polycyclic Codes Associated with Trinomials}
By the definition of CSS construction, we need two codes such that one is contained in the dual of the other one. Let $C$ be a polycylic code generated by $g(x)$. By ideal inclusion we know that
\begin{align*}
    \langle g(x)f(x) \rangle \subseteq  \langle g(x) \rangle 
\end{align*}
for any polynomial $f(x)$and we consider $C_2^{\perp}=\langle g(x)f(x)\rangle \subseteq \langle g(x) \rangle =C_1 $. Here  $g(x)$ is a divisor of a trinomial $t(x)=x^n-ax^i-b$ and it generates a polycyclic code associated with $t(x)$. Then to find subcodes of $C_1$, we use the factorization of $t(x)$ into irreducibles. For each divisor $f(x)$ of $t(x)$ we get a subcode of $C_1$.  The following results from \cite{aydin2021polycyclic} are useful in formulating the search.

\begin{lemma} \cite{aydin2021polycyclic}
Fix $n,i\in \mathbb{Z}$, and let $a_1, a_2, b \in \Fq^{*}$ be nonzero with $a_1\not=a_2$. Then $\gcd(x^n-a_1x^i-b,x^n-a_2x^i-b)=1$
\end{lemma}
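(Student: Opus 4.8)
The plan is to exploit the fact that $\Fq[x]$ is a Euclidean domain, so that any common divisor of two polynomials must also divide their difference. Writing $f_1(x) = x^n - a_1 x^i - b$ and $f_2(x) = x^n - a_2 x^i - b$, the first step is to compute the difference
\[
f_1(x) - f_2(x) = (a_2 - a_1)\,x^i.
\]
Since $a_1 \ne a_2$, the scalar $a_2 - a_1$ is a nonzero element of $\Fq$, hence a unit, so this difference is an associate of $x^i$.

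Next, let $d(x) = \gcd(f_1,f_2)$. Because $d$ divides both $f_1$ and $f_2$, it divides their difference $(a_2-a_1)x^i$; as $a_2 - a_1$ is a unit, this forces $d(x) \mid x^i$. Consequently the only possible irreducible factor of $d$ is $x$ itself, so $d(x) = c\,x^j$ for some unit $c \in \Fq^{*}$ and some integer $0 \le j \le i$.

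The final step is to rule out any positive power of $x$ by evaluating at $0$. Since $f_1(0) = -b$ and $b \ne 0$ by hypothesis, we have $x \nmid f_1$, hence $x \nmid d$. Combined with $d \mid x^i$, this leaves only $j = 0$, so $d$ is a unit and $\gcd(f_1,f_2) = 1$, as claimed.

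I do not anticipate a genuine obstacle here; the only points that demand care are the roles of the two hypotheses. The assumption $a_1 \ne a_2$ is precisely what makes the difference a \emph{nonzero} (unit) multiple of $x^i$ rather than the zero polynomial, and the assumption $b \ne 0$ is exactly what prevents $x$ from being a common factor. If $i = 0$ the difference is already a nonzero constant and the conclusion is immediate; otherwise the evaluation-at-zero argument does the work. Note that the reasoning never uses the particular value of the exponent $n$, so it applies uniformly for all admissible $n$ and $i$.
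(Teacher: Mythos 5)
Your proof is correct: the difference $f_1 - f_2 = (a_2 - a_1)x^i$ is a unit multiple of $x^i$ since $a_1 \neq a_2$, so any common divisor must be a power of $x$, and the nonzero constant term $f_1(0) = -b$ rules that out (with the $i=0$ case handled separately, as you note). The paper itself states this lemma without proof, citing \cite{aydin2021polycyclic}, and your argument is exactly the standard one used in that reference, so there is nothing further to reconcile.
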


\begin{lemma}\cite{aydin2021polycyclic}
Fix $n,i\in \mathbb{Z}$, and let $a,b_1, b_2\in \Fq$, with $b_1\not=b_2$. Then $\gcd(x^n-ax^i-b_1,x^n-ax^i-b_2)=1$.
\end{lemma}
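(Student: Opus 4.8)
The plan is to exploit the elementary fact that the greatest common divisor of two polynomials divides every $\Fq[x]$-linear combination of them, and in particular their difference. First I would set $f_1(x) = x^n - ax^i - b_1$ and $f_2(x) = x^n - ax^i - b_2$ and compute
\[
f_1(x) - f_2(x) = (x^n - ax^i - b_1) - (x^n - ax^i - b_2) = b_2 - b_1.
\]
The leading terms $x^n$ and the middle terms $ax^i$ cancel identically, leaving only the constant $b_2 - b_1 \in \Fq$.

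Next I would invoke the standard property that $\gcd(f_1, f_2)$ divides $f_1 - f_2$. Since $b_1 \neq b_2$ by hypothesis, the difference $b_2 - b_1$ is a nonzero element of $\Fq$, hence a unit in the polynomial ring $\Fq[x]$. A polynomial that divides a unit must itself be a unit, so $\gcd(f_1, f_2)$ is a nonzero constant; normalizing to the monic representative gives $\gcd(f_1, f_2) = 1$, as claimed.

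There is essentially no obstacle here: the only hypothesis used is $b_1 \neq b_2$, and notably neither $a$ nor the $b_j$ need be nonzero. The one point worth flagging is a contrast with the preceding lemma, where the difference of the two trinomials is the monomial $(a_2 - a_1)x^i$ rather than a constant; there one must additionally observe that $x \nmid f_1$ (because the constant term $-b$ is nonzero, which is why that lemma assumes $b \in \Fq^{*}$) in order to conclude that the gcd is coprime to $x^i$ and hence trivial. In the present statement the difference is already a unit, so that extra step is unnecessary and the conclusion is immediate.
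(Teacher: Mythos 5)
Your proof is correct: since the paper states this lemma without proof (it is quoted from \cite{aydin2021polycyclic}), the natural benchmark is the standard argument, which is exactly what you give --- the gcd divides the difference $f_1 - f_2 = b_2 - b_1$, a nonzero constant and hence a unit in $\Fq[x]$, forcing $\gcd(f_1,f_2)=1$. Your side remark correctly identifies why the companion lemma (differing in the coefficient of $x^i$) needs the extra hypothesis $b \in \Fq^{*}$ while this one does not, so nothing is missing.
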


\begin{theorem} \cite{aydin2021polycyclic}
Let $n$ and $i$ be positive integers with $i<n$. For two distinct trinomials $x^n-ax^i-b$ and $x^n-cx^i-d$, we have $\gcd(x^n-ax^i-b,x^n-cx^i-d)$ is either 1 or a binomial of degree $\gcd(n,i)$.
\end{theorem}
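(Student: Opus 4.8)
The plan is to reduce the computation of $\gcd(f,g)$, where $f=x^n-ax^i-b$ and $g=x^n-cx^i-d$, to a greatest common divisor of two \emph{binomials}, and then to settle that binomial case by the Euclidean algorithm acting on the exponents. The starting point is that $f$ and $g$ share their top term, so $f-g=(c-a)x^i+(d-b)$ has degree at most $i<n$; since $f-g$ lies in the ideal $\langle f,g\rangle$, I have $\gcd(f,g)=\gcd(f,f-g)$. The whole argument then branches on the coefficient $c-a$.

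First the degenerate branches. If $a=c$, distinctness forces $b\ne d$, so $f-g=d-b$ is a nonzero constant and $\gcd(f,g)=1$. If $a\ne c$ but $b=d$, then $f-g$ is a nonzero scalar multiple of $x^i$, whence $\gcd(f,g)=\gcd(f,x^i)$; because $b\ne 0$ we have $f(0)=-b\ne 0$, so $x\nmid f$ and again $\gcd(f,g)=1$. These two branches recover the two preceding lemmas.

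The substantive case is $a\ne c$ and $b\ne d$. Here $f-g=(c-a)(x^i-e)$ with $e=\frac{b-d}{c-a}\ne 0$, so $\gcd(f,g)=\gcd(f,x^i-e)$. I would reduce $f$ modulo $x^i-e$: writing $n=q_0 i+s$ with $0\le s<i$ and using $x^i\equiv e$, one gets $f\equiv e^{q_0}x^s-(ae+b)\pmod{x^i-e}$, hence $\gcd(f,g)=\gcd\bigl(x^i-e,\;e^{q_0}x^s-(ae+b)\bigr)$. If this remainder is a nonzero constant (possible only when $s=0$) the gcd is $1$; if it is the zero polynomial the gcd is $x^i-e$, a binomial of degree $i=\gcd(n,i)$ since then $i\mid n$; otherwise the remainder is a scalar multiple of a binomial $x^s-E'$, and since $e\ne0$ gives $x\nmid x^i-e$, the case $E'=0$ yields gcd $1$, leaving $\gcd(x^i-e,\,x^s-E')$ with $E'\ne0$.

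This last quantity is handled by the key auxiliary fact I would prove: for nonzero $\lambda,\mu$ and positive $s,t$, $\gcd(x^s-\lambda,x^t-\mu)$ is either $1$ or a binomial $x^{\gcd(s,t)}-\nu$. The proof is a single Euclidean step, $x^s-\lambda=x^{s-t}(x^t-\mu)+\mu\bigl(x^{s-t}-\lambda/\mu\bigr)$ for $s\ge t$, which replaces the exponent pair $(s,t)$ by $(t,s-t)$; iterating is exactly the subtractive Euclidean algorithm, terminating with a binomial of degree $\gcd(s,t)$ paired against $x^0-\kappa=1-\kappa$, which is either $0$ (keep the binomial) or a nonzero constant (gcd $1$). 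Applying this with exponents $i$ and $s$ and using $\gcd(i,s)=\gcd(i,\,n\bmod i)=\gcd(n,i)$ finishes the theorem. I expect the only real obstacle to be the clean formulation and induction of this binomial-gcd lemma; everything else is bookkeeping in the Euclidean reduction, and the separability of the trinomials never enters because the argument stays entirely within $\Fq[x]$.
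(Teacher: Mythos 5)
Your argument is correct, and it is worth noting at the outset that this paper does not actually contain a proof of the theorem: it is quoted from \cite{aydin2021polycyclic}, so there is no internal proof to compare against line by line. Checking your proposal on its own merits: the reduction $\gcd(f,g)=\gcd(f,f-g)$ is sound; the two degenerate branches ($a=c$, or $b=d$ with $f(0)=-b\neq 0$) are exactly the content of the two lemmas the paper states just before the theorem, so you could simply invoke them rather than reprove them; and in the substantive branch your Euclidean reduction $f\equiv e^{q_0}x^s-(ae+b)\pmod{x^i-e}$ with $n=q_0i+s$ is correct, including the fiddly subcases $s=0$ (zero remainder gives $\gcd=x^i-e$ with $i\mid n$, nonzero constant remainder gives $\gcd=1$) and $ae+b=0$ (where $x\nmid x^i-e$ forces $\gcd=1$). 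Your binomial lemma is also right: the identity $x^s-\lambda=x^{s-t}(x^t-\mu)+\mu\left(x^{s-t}-\lambda/\mu\right)$ is an exact gcd-preserving step, the tracked constants stay nonzero, and the subtractive Euclid on exponents terminates at the pair $\left(x^{\gcd(s,t)}-\nu,\,1-\kappa\right)$, giving either $1$ or precisely a binomial of degree $\gcd(i,\,n\bmod i)=\gcd(n,i)$ with nonzero constant term. A common alternative route for this kind of statement works with a putative common divisor $D$: since $D(0)\neq 0$, the class of $x$ is invertible modulo $D$, and from $x^i\equiv e$ and $x^n\equiv ae+b \pmod{D}$ one gets $x^{\gcd(n,i)}\equiv \nu \pmod{D}$ via B\'ezout exponents, so $D\mid x^{\gcd(n,i)}-\nu$; that route then needs one extra step (reducing $f$ modulo $x^{\gcd(n,i)}-\nu$ to a constant that must vanish) to upgrade ``divides a binomial'' to ``equals the binomial,'' whereas your Euclidean chain delivers the exact-degree conclusion in one pass. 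The only cosmetic point: you implicitly use $a,b,c,d\in\mathbb{F}_q^{*}$, which is what ``trinomial'' means here and matches the hypotheses of the surrounding lemmas; it would be worth saying so explicitly.
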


\begin{lemma} \cite{aydin2021polycyclic}
For any $x^n-ax^i-b\not=x^n-cx^i-d\in \mathbb{F}_3[x]$, we have  $\gcd(x^n-ax^i-b,x^n-cx^i-d)=1$.
\end{lemma}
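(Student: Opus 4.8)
The plan is to deduce this from the preceding theorem, which already guarantees that $\gcd(x^n-ax^i-b,\,x^n-cx^i-d)$ is either $1$ or a monic binomial of degree $g:=\gcd(n,i)$. So it suffices to rule out the binomial case over $\mathbb{F}_3$. First I would pin down the shape that such a binomial must take. Since we are dealing with genuine trinomials, the constant terms $b$ and $d$ are nonzero, so neither trinomial has $x=0$ as a root, and hence no common divisor can be divisible by $x$. A monic degree-$g$ binomial that is not divisible by $x$ must have its two nonzero terms in degrees $g$ and $0$; that is, it must be of the form $x^g-\gamma$ with $\gamma\in\mathbb{F}_3\setminus\{0\}=\{1,2\}$.

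Next I would turn divisibility into numerical constraints. Writing $n=gm$ and $i=gj$ and reducing modulo $x^g-\gamma$ (so that $x^g\equiv\gamma$, whence $x^n\equiv\gamma^m$ and $x^i\equiv\gamma^j$), the condition $(x^g-\gamma)\mid(x^n-ax^i-b)$ collapses to the single scalar equation
\begin{align*}
\gamma^m-a\gamma^j-b=0,
\end{align*}
and likewise $\gamma^m-c\gamma^j-d=0$ for the second trinomial. The crucial simplification special to $\mathbb{F}_3$ is that $\gamma\in\{1,-1\}$, so $s:=\gamma^j$ and $t:=\gamma^m$ each lie in $\{1,-1\}$ and depend only on the parities of $j$ and $m$. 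Consequently both pairs $(a,b)$ and $(c,d)$ satisfy the \emph{same} linear relation $sX+Y=t$ with $X,Y\in\{1,2\}$ and $s,t\in\{1,-1\}$.

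The final step is a short finite check. Computing $sX+Y$ over the four sign choices $(X,Y)\in\{1,-1\}^2$ yields, as a multiset and using $2\equiv-1$ and $-2\equiv1$ in $\mathbb{F}_3$, the values $\{2,0,0,1\}$ regardless of $s$. The value $2$ is attained by the unique pair $(X,Y)=(s,1)$, and the value $1$ by the unique pair $(X,Y)=(-s,-1)$; the value $0$ is attained twice but never occurs here because $t=\gamma^m\neq0$. Hence for each admissible $(s,t)$ the relation $sX+Y=t$ has exactly one solution with both coordinates nonzero, forcing $(a,b)=(c,d)$ and contradicting that the two trinomials are distinct. Therefore the binomial case is impossible and the gcd equals $1$.

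I expect the main subtlety to lie not in the arithmetic but in the reduction bookkeeping: justifying that divisibility by $x^g-\gamma$ is equivalent to the vanishing of the single reduced constant (this relies on $g\mid n$ and $g\mid i$, so that every exponent occurring in the trinomials is a multiple of $g$), and being explicit that the trinomial convention $a,b,c,d\in\Fq^{*}$ is exactly what makes the statement hold. Dropping nonvanishing of the constant terms would, for instance, let $x^n-x^i$ and $x^n-1$ share the common factor $x^g-1$; everything that remains is a field-specific, finite case analysis.
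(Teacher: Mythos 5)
Your argument is correct, and every step checks out: Theorem 3.3 reduces the question to excluding a common binomial divisor of degree $g=\gcd(n,i)$; the nonzero constant terms $b,d$ rule out divisibility by $x$ and force the binomial into the form $x^g-\gamma$ with $\gamma\in\{1,-1\}$; since $g\mid n$ and $g\mid i$, reduction modulo $x^g-\gamma$ turns divisibility into the scalar equations $\gamma^m-a\gamma^j-b=0$ and $\gamma^m-c\gamma^j-d=0$; and your finite check that $sX+Y=t$ with $s,t\in\{1,-1\}$ has a unique solution in nonzero $X,Y$ forces $(a,b)=(c,d)$, contradicting distinctness. Note, however, that this paper states the lemma without proof, citing \cite{aydin2021polycyclic}, so there is no in-paper argument to compare against; I can only set your route beside the most natural alternative suggested by the surrounding results. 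That alternative is shorter and bypasses Theorem 3.3 entirely: Lemmas 3.1 and 3.2 already settle the cases $a\neq c,\ b=d$ and $a=c,\ b\neq d$ over any field, and over $\mathbb{F}_3$ the only remaining case of distinctness is $c=-a$ and $d=-b$ (the nonzero elements being $\pm 1$), whereupon the sum of the two trinomials is $2x^n$; any common divisor therefore divides $x^n$, yet it also divides a polynomial with nonzero constant term, so it is a unit. What your approach buys in exchange for its extra length is uniformity and insight: it treats all distinctness cases at once, and it isolates exactly what is special about $\mathbb{F}_3$ --- namely that $\gamma^k$ depends only on the parity of $k$, so the evaluation constraints collapse to a four-element sign check --- which also explains why the lemma fails to extend verbatim to larger fields, where the equation $\gamma^m-a\gamma^j-b=0$ can admit several nonzero solutions $(a,b)$.
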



We present some examples of quantum codes with new parameters in Table 1 where $t(x)$ refers to the trinomial, $g_1(x)$ is the generator of $C_1$, and $h_2(x)$ is the generator of $C_2^\perp$, i.e. $\langle g_1(x) \rangle =C_1$ and $\langle g_1(x)f(x) \rangle =\langle h_2(x) \rangle =C_2^\perp$. The last column in the table refers to the literature where a comparable code is presented. Every code in this table has a higher minimum distance than the comparable code presented in the literature. In the tables below, the coefficients of a polynomial are listed in increasing powers of $x$, with leading coefficient at the right. Also, for alphabet of size greater than 10, $A$ denotes $10$, $B$ denotes $11$, ..., $H$ denotes $17$. Hence the entry $[79F1]$ for $g_1$ on the first row of Table 2 represents the polynomial $7+9x+15x^2+x^3$

\begin{table}[H]
\resizebox{.8\textwidth}{!}{\begin{minipage}{\textwidth}

\caption{\small{New quantum codes  constructed from polycyclic codes associated with trinomials}}
\begin{tabular}{p{2.2cm}p{2.2cm}p{2.8cm}p{5.2cm}p{.7cm}}
\\\hline\noalign{\smallskip}
$[[n,k,d]]_{q^2}$  &$t$& $g_1$ & $h_2$ & Ref.\\ 
\noalign{\smallskip}\hline\noalign{\smallskip}
  $[[60, 36, 5]]_{3^2}$ & $x^{60}-2x^{6}-2$  & $[1121011011001]$ & $[1222021012001]$ & \cite{22}\\
  [.5ex]
  $[[81, 52, 5]]_{3^2}$ & $x^{81}-x^{30}-1$  & $[200210101101001]$ & $[1021111212022001]$ & \cite{16}\\
  [.5ex]
  $[[80, 54, 4]]_{5^2}$ & $x^{80}-x-1$  & $[4231342331]$ & $[101124230104302301]$ & \cite{8932583}\\
  [.5ex]
  $[[96, 80, 4]]_{5^2}$ & $x^{96}-x-1$  & $[310032201]$ & $[101110301]$ & \cite{17}\\
  [.5ex]
  $[[52, 28, 4]]_{11^2}$ & $x^{52}-2x^{2}-2$  & $[7263851]$ & $[84782120107151973361]$ & \cite{12}\\
  [.5ex]
  $[[11, 1, 5]]_{13^2}$ & $x^{11}-x-3$  & $[8A0C31]$ & $[352721]$ & \cite{quantumFromConstacyclic}\\
  [.5ex]
  $[[7, 1, 4]]_{17^2}$ & $x^{7}-x^{2}-2$  & $[BFC1]$ & $[99101]$ & \cite{quantumFromConstacyclic}\\
  [.5ex]
\noalign{\smallskip}\hline
\end{tabular}
\end{minipage}}
\end{table}


\subsection{CSS Codes by a Polycyclic Code Associated with Trinomials and a Random Linear Code}
In the search above, $C_1$ and $C_2$ are both polycyclic codes associated with the same trinomial. However, this is not required. Hence, we can find some $f(x)$ with desired degrees, and then construct $C_2$ by $g(x)f(x)$, where $C_1=\langle g(x) \rangle$ and $C_2=\langle g(x)f(x) \rangle$, i.e., $C_2$ is not neccessarily polycyclic. 

Note that this method works particularly well for \textit{target search} by which we mean, we fix the field $q$, length $n$ and dimension $k$, and find the best known minimum distance for this set of parameters. Next, we randomly generate a number of polynomials $f(x)$ of degree $k$ and iterate through divisors $g(x)$ of $t(x)$, where $t(x)$ is a trinomial, so that the CSS code constructed by $C_1=\langle g(x) \rangle$ and $C_2^\perp =\langle g(x)f(x) \rangle$ is in the form $[[n,k]]_{q^2}$. 

\begin{table}[H]
\resizebox{.8\textwidth}{!}{\begin{minipage}{\textwidth}

\caption{\small{New quantum codes  constructed from a polycyclic code and a random linear code}}
\begin{tabular}{p{2.2cm}p{2.1cm}p{2.1cm}p{8.8cm}p{.7cm}}
\\\hline\noalign{\smallskip}
$[[n,k,d]]_{q^2}$  &$t$& $g_1$ & $g_2$ & Ref.\\ 
\noalign{\smallskip}\hline\noalign{\smallskip}
  $[[7, 1, 4]]_{17^2}$ & $x^{17}-x-1$  & $[79F1]$ & $[36DD1]$ & \cite{quantumFromConstacyclic} \\
  [.5ex]
  $[[24, 18, 3]]_{17^2}$ & $x^{24}-x-1$  & $[EG41]$ & $[A6872366CA28406AE407F1]$ & \cite{Fq+v1Fq+...+vrFq}\\
  [.5ex]
  $[[36, 30, 3]]_{17^2}$ & $x^{36}-x-1$  & $[3791]$ & $[94F9161A374963B0E663E35AA4A6EFDGG1]$ & \cite{Fq+v1Fq+...+vrFq} \\
  [.5ex]
  $[[48, 36, 4]]_{17^2}$ &  $x^{48}-x-3$ & $[CD3BE1]$& $[60942809G709FE411F4910DCC6A7B5BC1545D270681]$ & \cite{Fq+v1Fq+...+vrFq}\\
  [.5ex]
\noalign{\smallskip}\hline
\end{tabular}
\end{minipage}}
\end{table}


\subsection{CSS Construction by Two Polycyclic Codes Associated with Multinomials}

It is also possible to expand the search space by considering multinomials instead of trinomials. Even though data suggests that binomials and trinomials have more divisors than other multinomials, we still found some good codes using this  search. Some good quantum code have been found using this method in \cite{aydin2021polycyclic}.


\begin{table}[H]
\resizebox{.8\textwidth}{!}{\begin{minipage}{\textwidth}

\caption{\small{New quantum codes constructed from multinomials}}
\begin{tabular}{p{1.5cm}p{8.8cm}p{3.2cm}p{3cm}p{.7cm}}
\\\hline\noalign{\smallskip}
$[[n,k,d]]_{q^2}$  &$m$& $g_1$ & $g_2$ & Ref.\\ 
\noalign{\smallskip}\hline\noalign{\smallskip}
  $[[36, 6, 5]]_{3^2}$ & $[2011110210112212022010120211000010112102000101111]$  & $[222101212100200021]$ & $[222101212100200021]$ & \cite{skew}\\
  [.5ex]
  $[[22, 2, 6]]_{5^2}$ & $[13343122443414240410122]$  & $[4223310221]$ & $[442143133401]$ & \cite{quantumtwistedtables}\\
  [.5ex]
\noalign{\smallskip}\hline
\end{tabular}
\end{minipage}}
\end{table}

\section{Obtaining New Codes from Existing Codes}
Like classical codes, we can also obtain new quantum codes from existing ones using the standard method of extending (E), shortening (S) or puncturing (P) a code or taking the direct sum (DS) of two given codes. All of these standard constructions are available in the Magma software.  We applied these constructions on exiting coded and  obtained many new codes that have better minimum distances  than the codes with the same length and dimension presented in the literature. In many cases we started with the codes in the tables \cite{quantumtwistedtables}.


\begin{longtable}{p{2.2cm}p{5cm}p{2.4cm}p{0.7cm}}
\caption{\small{New Quantum Codes from Existing Codes}}
\\\hline\noalign{\smallskip}
$[[n,k,d]]_{q^2}$& Existing Code(s) (Method) &$[[n',k',d']]_{q^2}$ &Ref.\\ 
\noalign{\smallskip}\hline\noalign{\smallskip}
$[[41, 1, 10]]_{3^2}$ & $[[33, 1, 10]]$ (E) &$[[41, 1, 8]]_{3^2}$ &\cite{quantumtwistedtables} \\
  [.5ex]
  $[[45, 21, 7]]_{3^2}$ & $[[41, 21, 7]]$ (E) & $[[45, 21, 5]]_{3^2}$ &\cite{22} \\
  [.5ex]
  $[[65, 29, 6]]_{3^2}$ & $[[25, 7, 6]]$, $[[40, 22, 6]]$ (DS)& $[[65, 29, 5]]_{3^2}$ &\cite{quantumtwistedtables} \\
  [.5ex]
  $[[119, 7, 13]]_{3^2}$ & $[[52, 3, 13]]$, $[[67, 4, 13]]$ (DS)& $[[119, 7, 11]]_{3^2}$ &\cite{quantumtwistedtables} \\
  [.5ex]
  $[[120, 6, 14]]_{3^2}$ & $[[53, 3, 14]]$, $[[67, 3, 14]]$ (DS)& $[[20, 6, 12]]_{3^2}$ &\cite{quantumtwistedtables} \\
  [.5ex]
  $[[120, 96, 6]]_{3^2}$ & $[[121, 96, 7]]$ (P)& $[[120, 96, 4]]_{3^2}$ &\cite{22} \\
  [.5ex]
  $[[34, 2, 8]]_{5^2}$ & $[[26, 2, 8]]$ (E)& $[[34, 2, 6]]_{5^2}$ &\cite{quantumtwistedtables} \\
  [.5ex]
  $[[38, 2, 8]]_{5^2}$ & $[[26, 2, 8]]$ (E)& $[[38, 2, 7]]_{5^2}$ &\cite{quantumtwistedtables} \\
  [.5ex]
  $[[80, 48, 5]]_{5^2}$ & $[[11, 1, 5]]$, $[[69, 47, 6]]$ (DS)& $[[80, 48, 2]]_{5^2}$ &\cite{8} \\
  [.5ex]
  $[[80, 54, 7]]_{5^2}$ & $[[78, 54, 7]]$ (E)& $[[80, 54, 3]]_{5^2}$ &\cite{8932583} \\
  [.5ex]
  $[[80, 56, 5]]_{5^2}$ & $[[16, 4, 5]]$, $[[64, 52, 5]]$ (DS)& $[[80, 56, 3]]_{5^2}$ &\cite{19} \\
  [.5ex]
  $[[88, 8, 12]]_{5^2}$ & $[[52, 8, 12]]$ (E)& $[[88, 8, 5]]_{5^2}$ &\cite{8} \\
  [.5ex]
  $[[88, 48, 7]]_{5^2}$ & $[[22, 2, 7]]$, $[[66, 46, 7]]$ (DS)& $[[88, 48, 2]]_{5^2}$ &\cite{6} \\
  [.5ex]
  $[[99, 3, 11]]_{5^2}$ & $[[29, 1, 11]]$, $[[70, 2, 11]]$ (DS)& $[[99, 3, 9]]_{5^2}$ &\cite{quantumtwistedtables} \\
  [.5ex]
  $[[27, 17, 5]]_{7^2}$ & $[[25, 17, 5]]$ (E)& $[[27, 17, 3]]_{7^2}$ &\cite{6} \\
  [.5ex]
  $[[49, 40, 5]]_{7^2}$ & $[[48, 40, 5]]$ (E)& $[[49, 40, 3]]_{7^2}$ &\cite{16} \\
  [.5ex]
  $[[60, 36, 5]]_{7^2}$ & $[[18, 8, 5]]$, $[[42, 28, 5]]$ (DS)& $[[60, 36, 4]]_{7^2}$ &\cite{6} \\
  [.5ex]
   $[[55, 35, 6]]_{7^2}$ & $[[58, 35, 9]]$ (P)& $[[55, 35, 4]]_{7^2}$ &\cite{quantumtwistedtables} \\
  [.5ex]
 \noalign{\smallskip}\hline
\end{longtable}

\section{Features of the Database}

We introduce an interactive, \href{http://quantumcodes.info/} {online database} \cite{quantumcodesdatabase}  which stores the parameters of best known $[[n, k, d]]_{q^2}$ QECCs over fields $GF(q^2)$ for $q \leq 29$ where $q$ is a power of a prime and $n \leq 200$. Users of the database can search for best known QECCs by fixing $q$, $n$ and $k$ or by fixing $q$, $n$ and $d$. Also, users can search for multiple QECCs at the same time by fixing $q$ and searching over ranges for $n$ and $d$. Every QECC displayed in the search results has a reference which is a link to the paper in which the code was introduced or an online table where it is listed. In addition, all MDS codes are marked with an asterisk.

To construct such a database, we surveyed recent literature on QECCs and kept a record of parameters of codes in the literature. The bibliography gives the complete list of papers that have been examined. %

\end{document}